\newtheorem{definition}{Definition} 
\newtheorem{proposition}{Proposition}
\newcommand{\pnt}[1]{{\mbox{\boldmath $#1$}}}
\newcommand{\cof}[2]{\mbox{$#1_{\boldsymbol{#2}}$}}
\newcommand{\s}[1]{\mbox{$\{#1\}$}}
\newcommand{\nGz}[2]{$G_{non-\{z\}}$}
\newcommand{\mi}[1]{\mathit{#1}}
\newcommand{\ti}[1]{\textit{#1}}
\newcommand{\tb}[1]{\textbf{#1}}
\newcommand{\ttt}{\>\>\>}
\newcommand{\Tt}{\>\>}
\newcommand{\Sup}[2]{\mbox{$#1^\mi{#2}$}}
\newcommand{\prob}[2]{\mbox{$\exists{#1} [#2]$}}
\newcommand{\Prob}[3]{\mbox{$\exists{#1}\exists{#2}[#3]$}}
\newcommand{\Comment}[1]{}
\newcommand{\Impl}[2]{\mbox{$#1 \rightarrow #2$}}
\newcommand{\Rng}[2]{\mbox{$\mi{Rng}(#1,#2)$}}
\newcommand{\EQ}[2]{\mbox{$\mi{EQ}(#1,#2)$}}
\newcommand{\eco}{EC\_CDCL}
\newcommand{\ecn}{EC\_CRR}
\newcommand{\SE}{\ti{SimByExcl}}
\definecolor{bisque}{rgb}{1.0, 0.89, 0.77}
\definecolor{buff}{rgb}{0.94, 0.86, 0.51}
\begin{document}

\title{Equivalence Checking and Simulation By Computing Range Reduction}

\author{\IEEEauthorblockN{Eugene Goldberg} 
\IEEEauthorblockA{
eu.goldberg@gmail.com}}
\maketitle

\begin{abstract}
We introduce new methods of equivalence checking and simulation based
on Computing Range Reduction (CRR). Given a combinational circuit $N$,
the CRR problem is to compute the set of outputs that disappear from
the range of $N$ if a set of inputs of $N$ is excluded from
consideration. Importantly, in many cases, range reduction can be
efficiently found even if computing the entire range of $N$ is
infeasible.

Solving equivalence checking by CRR facilitates generation of proofs
of equivalence that mimic a ``cut propagation'' approach. A limited
version of such an approach has been successfully used by commercial
tools. Functional verification of a circuit $N$ by simulation can be
viewed as a way to reduce the complexity of computing the range of
$N$.  Instead of finding the entire range of $N$ and checking if it
contains a bad output, such a range is computed only for one
input. Simulation by CRR offers an alternative way of coping with the
complexity of range computation. The idea is to exclude a subset of
inputs of $N$ and compute the range \textit{reduction} caused by such an
exclusion. If the set of disappeared outputs contains a bad one, then
$N$ is buggy.
\end{abstract}

\section{Introduction}
 The objective of this paper is to emphasize the importance of developing
 efficient algorithms for Computing Range Reduction (CRR). Earlier we
 showed that CRR can be used for model
 checking~\cite{tech_rep_crr}. Adding equivalence checking and
 simulation to the list of problems that can be handled by CRR makes
 the need for developing efficient CRR algorithms more obvious.

\subsection{The CRR problem}
Let $N(X,Y,Z)$ be a combinational circuit where $X,Y,Z$ are the sets
of input, intermediate and output variables of $N$ respectively (see
Figure~\ref{fig:one_circ}).  We will refer to the set of all outputs
produced by $N$ as the range of $N$.  By an ``output'' here we mean a
complete assignment of values to the variables of $Z$. We will also
use term ``input'' to denote a complete assignment to the variables of
$X$.

\setlength{\intextsep}{4pt}
\begin{wrapfigure}{l}{1.0in}
 \begin{center}
    \includegraphics[width=0.9in]{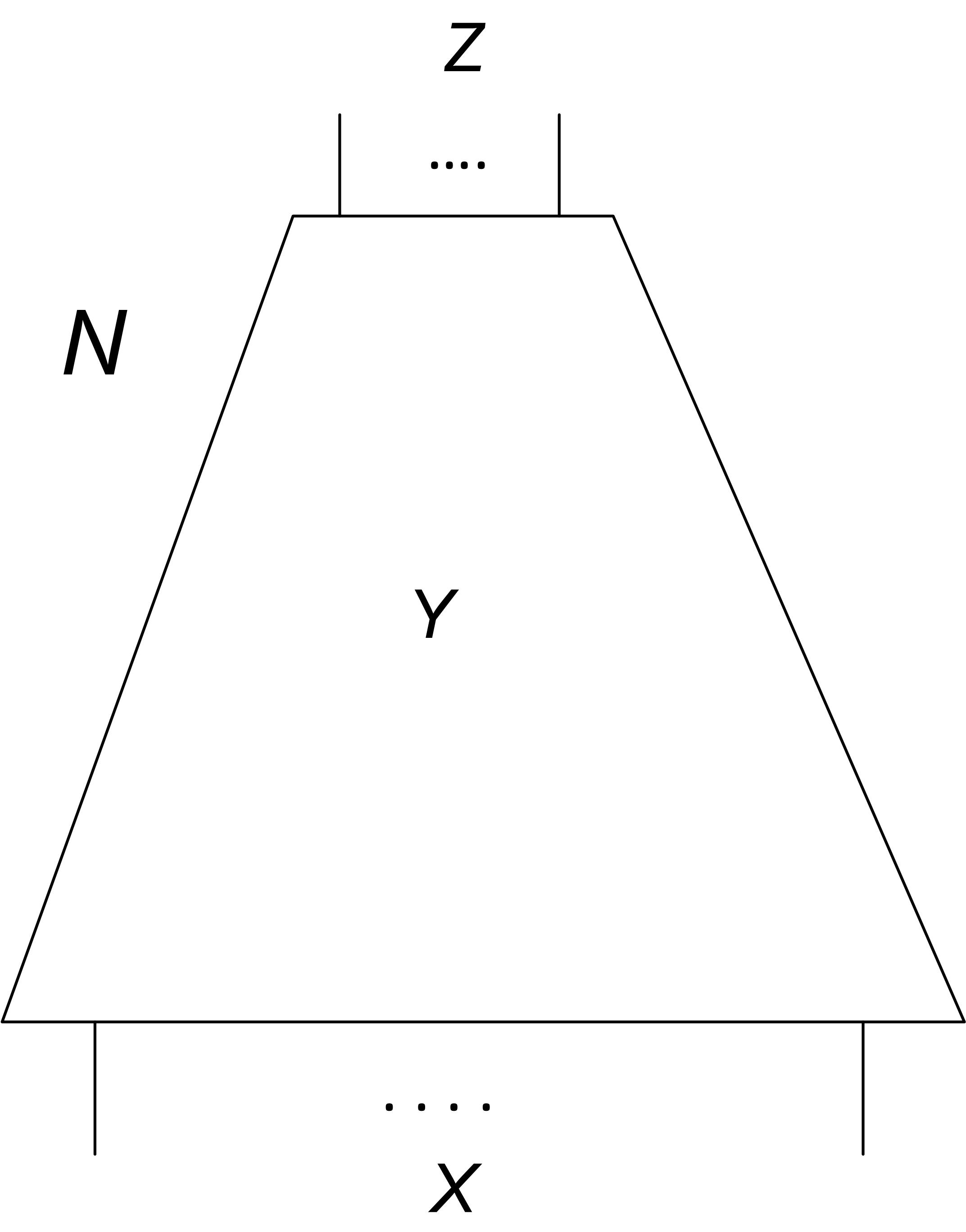}
  \end{center}
\vspace{-4pt}
\caption{Combinational circuit $N$}
\label{fig:one_circ}
\end{wrapfigure}

Suppose that one excludes a set $K$ of complete assignments to $X$
from the set of available inputs to $N$. If an output \pnt{z} is
produced only by inputs that are in $K$, excluding the inputs of $K$
from consideration leads to disappearance of \pnt{z} from the range of
$N$.  \tb{The CRR problem} is to compute the outputs of $N$ that are
excluded from the range of $N$ if the inputs of a set $K$ are excluded
from consideration.

\subsection{Equivalence checking by CRR}
\label{subsec:ec_by_crr}
Equivalence Checking can be a hard problem even for combinational
circuits. This is especially true when the circuits to be compared
have few functionally equivalent internal points.  Our motivation here
is that, as we argue in Section~\ref{sec:ec_by_crr}, equivalence
checking by CRR facilitates construction of ``natural'' proofs. Such
proofs are generated by industrial equivalence checkers when the
circuits to be compared have a lot of internal points that are
functionally equivalent.

\setlength{\intextsep}{4pt}
\begin{wrapfigure}{l}{1.5in}
 \begin{center}
    \includegraphics[width=1.4in]{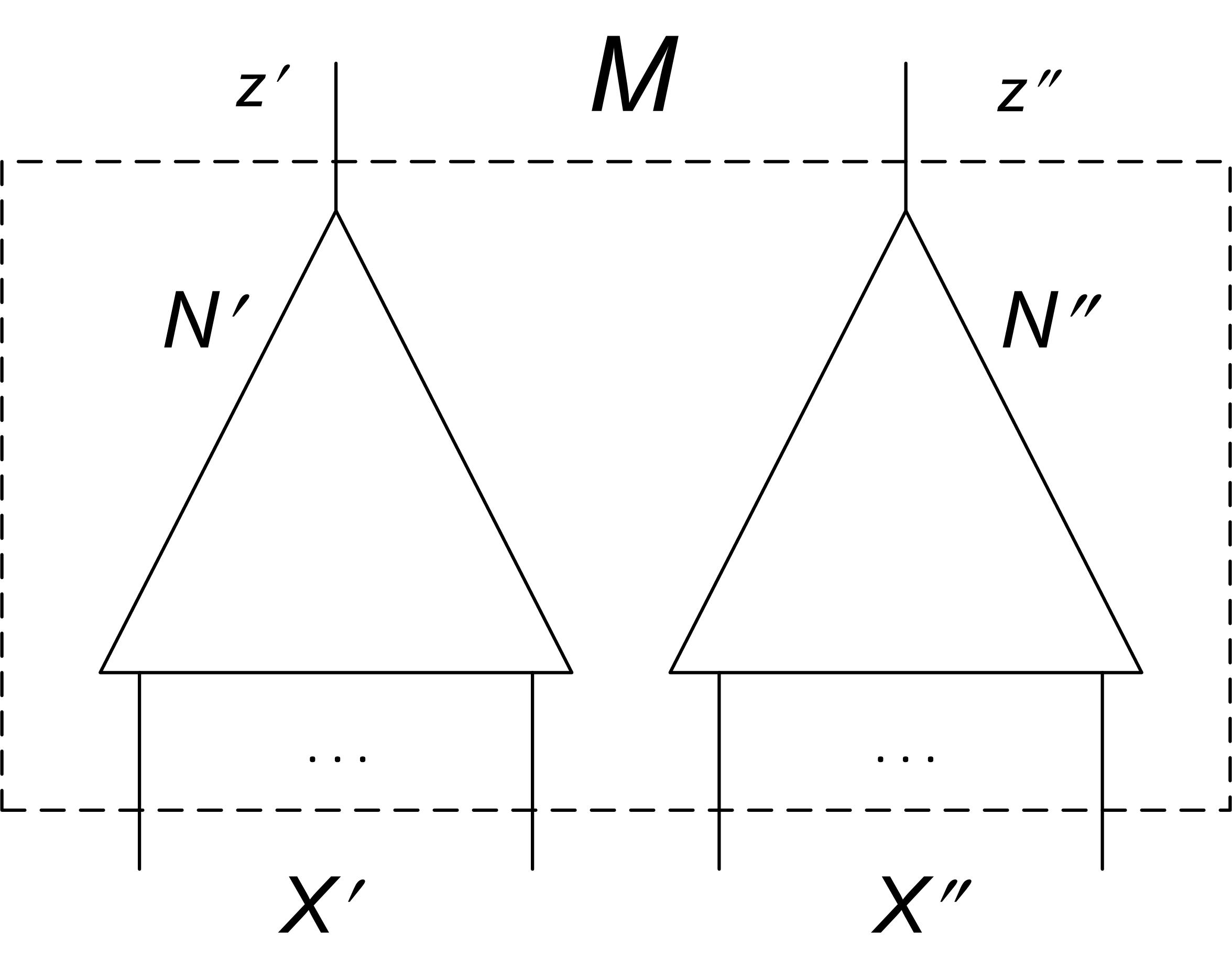}
  \end{center}
\caption{$M$ is a two-output circuit composed of independent circuits $N'$ and $N''$}
\label{fig:two_indep}
\end{wrapfigure}

Application of CRR to equivalence checking is based on the following observation. Let
$N'$ and $N''$ be single-output combinational circuits to be checked
for equivalence. (The expression ``a single-output circuit'' means
that this circuit has only one output variable. Whether we use term
``output'' to refer to an ``output variable'' or ``output assignment''
should be clear from the context.)  Let $M$ be a two-output circuit
composed of $N'$ and $N''$ as shown in Figure~\ref{fig:two_indep}.
Note that the sets $X'$ and $X''$ of input variables of $N'$ and $N''$
are independent of each other.  Then, if $N'$ and $N''$ are not
constants, the range of circuit $M$ consists of all four assignments
to the output variables $z'$,$z''$ of $N'$ and $N''$.

Let (\pnt{x'},\pnt{x''}) denote an assignment to variables $X' \cup
X''$. Suppose that one excludes from consideration all assignments
(\pnt{x'},\pnt{x''}) where $\pnt{x'} \neq \pnt{x''}$. Then if $N'$ and
$N''$ are functionally equivalent, such constraint on inputs of $M$
should lead to disappearing assignments $(z'=0,z''=1)$ and
$(z'=1,z''=0)$ from the range of $M$. If this is not the case, then
there is an input (\pnt{x'},\pnt{x''}) of $M$ where \pnt{x'}=\pnt{x''}
for which $N'$ and $N''$ evaluate to different values i.e.  $N'$ and
$N''$ are inequivalent.

\subsection{Simulation by CRR}
\label{subsec:sim_by_crr}
Functional verification of a combinational circuit $N$ comes down to
checking if the range of $N$ contains an erroneous output.  The
straightforward approach of computing the range of $N$ is extremely
inefficient.  Simulation can be viewed as a way to simplify the range
computation problem by reducing the set of inputs of $N$ to only one
(regular simulation) or to a subset of inputs (symbolic
simulation~\cite{SymbolSim}).  In this paper, we consider a different
way to cope with complexity of range computation called
simulation-by-exclusion.  It is based on the idea of computing the
\ti{reduction of range} of $N$ caused by excluding some inputs from
consideration rather than the entire range of $N$.

Suppose that one needs to check if an erroneous output \pnt{z} is
produced by $N$.  Assume that a set of inputs $K$ is excluded from
consideration. Let $Q$ be the set of outputs that disappear from the
range of $N$ due to exclusion of inputs from $K$.  One can have only
the following two cases. The first case is $\pnt{z} \not\in Q$.  This
means that either \pnt{z} is not in the range of $N$ i.e. $N$ is
correct or $N$ is buggy but there is an input $\pnt{x} \not\in K$
producing \pnt{z}. So exclusion of the inputs of $K$ does not make
\pnt{z} disappear from the range of $N$. Then one can safely remove
the inputs of $K$ from further consideration because if $N$ is buggy,
the set of available inputs still contains a counterexample. The
second case is $\pnt{z} \in Q$.  Then $N$ is buggy because \pnt{z} is
in its range.

In this paper, we describe the basic algorithm of
simulation-by-exclusion and a few of its modifications.

\subsection{CRR by partial quantifier elimination}
In~\cite{tech_rep_crr}, we showed that the CRR problem comes down to
Partial Quantifier Elimination (PQE). The difference between partial
and complete elimination is that, in PQE, only a part of the formula
is taken out of the scope of quantifiers.  Let us explain the relation
between CRR and PQE in the context of equivalence checking. Let
$\mi{EQ}(X',X'')$ denote a propositional formula such that
$\mi{EQ}(\pnt{x'},\pnt{x''}) = 1$ iff \pnt{x'} = \pnt{x''}. Equivalence checking by CRR comes down
to taking $\mi{EQ}$ out of the scope of quantifiers in formula
\prob{W}{\mi{EQ} \wedge F' \wedge F''}. Here formulas $F'$ and $F''$
specify circuits $N'$ and $N''$ of Figure~\ref{fig:two_indep} and $W =
X' \cup X'' \cup Y' \cup Y''$.

Taking $\mi{EQ}$ out of the scope of quantifiers means finding a
quantifier-free formula $H(z',z'')$ such that $\prob{W}{\mi{EQ} \wedge
  F' \wedge F''} \equiv H \wedge \prob{W}{F' \wedge F''}$. If an
output of circuit $M$ of Figure~\ref{fig:two_indep} disappears from
its range after excluding inputs falsifying $\mi{EQ}$, this output
falsifies $H$. So if $H(0,1)=H(1,0)=0$, circuits $N'$ and $N''$ are
equivalent. 

Let us explain the benefits of solving PQE. We assume here and
henceforth that all formulas are represented in the Conjunctive Normal
Form (CNF).  The appeal of PQE is that it allows one to focus on
deriving implications that matter.  Namely, formula $H$ can be built
solely from clauses that are implied by formula $\mi{EQ} \wedge F'
\wedge F''$ and are not implied by formula $F' \wedge F''$. This means
that a PQE solver can ignore the resolvents obtained only from clauses
of $F' \wedge F''$ focusing on resolvents that are descendants of
clauses of $\mi{EQ}$.  Such zooming in on the ``right'' resolvents can
make building formula $H$ much more efficient.

%
%
\subsection{Difference in CRR for equivalence checking  and simulation}
Our methods of using CRR for equivalence checking and simulation are
different in one important aspect. In equivalence checking, one builds
a circuit $M$ of Figure~\ref{fig:two_indep} and eliminates the
\ti{spurious} behaviors where $\pnt{x'}\neq \pnt{x''}$. The presence
of a bug corresponds to the situation when an erroneous output \ti{is
  not excluded} from the range of $M$ under the input constraints.
Conversely, simulation-by-exclusion is based on elimination of
\ti{valid} inputs.  Then, the presence of a bug corresponds to the
situation where an erroneous output \ti{is excluded} from the range of
the circuit under test.

The difference above is important for the following reason. When
describing application of CRR to equivalence checking and simulation
in Subsections ~\ref{subsec:ec_by_crr} and~\ref{subsec:sim_by_crr} we
assumed that range reduction is computed precisely. However, a PQE
solver computes a superset of the set of excluded outputs
~\cite{tech_rep_crr} that may include outputs that are not in the
range (see Subsection~\ref{subsec:pqe_qual}.)  We will refer to such
outputs as noise.  Adding noise to the set of excluded outputs in
equivalence checking by CRR cannot trigger false alarms. (As we
mentioned above the situation of concern here is when a bad output of
$M$ is not excluded under input constraints. ) Conversely, in
simulation-by-exclusion, adding noise can set off a false alarm. This
happens when a bad output \pnt{z} is in the set of excluded outputs
computed by a PQE solver but, in reality, \pnt{z} is not in the range
of the circuit.

%
%
\subsection{Structure of the paper}
This paper is structured as follows. Section~\ref{sec:crr_def}
formally defines the CRR problem. Solving this problem by PQE is
discussed in Section~\ref{sec:crr_by_pqe}. In
Section~\ref{sec:ec_by_crr}, we introduce a method of equivalence
checking by CRR.  Simulation-by-exclusion is discussed in
Sections~\ref{sec:rng_cmp_compl} and ~\ref{sec:sim_by_excl}. Some
background and conclusions are given in Section~\ref{sec:background}
and Section~\ref{sec:conclusions} respectively.

\section{Problem Of Computing Range Reduction}
\label{sec:crr_def}
Let $N(X,Y,Z)$ be a combinational circuit where $X,Y,Z$ are input,
intermediate and output variables respectively. Let $N$ evaluate to
output \pnt{z} (i.e. a complete assignment to $Z$) under an input
\pnt{x} (i.e. a complete assignment to $X$).  We will say that \pnt{z}
\ti{is produced} by $N$.

Let $A(X)$ be a set of inputs of $N$. Denote by {\boldmath
  $\mi{Rng}(N,A)$} \tb{the range} of $N$ under inputs of $A$.  That is
an output $\pnt{z}$ is in \Rng{N}{A} iff \pnt{z} is produced by $N$
for some input $\pnt{x} \in A$.  If $A$ consists of all $2^{|X|}$
inputs of $N$, then \Rng{N}{A} specifies the entire range of $N$.
From the definition of \Rng{N}{A} it follows that $\Rng{N}{A \setminus
  B} \subseteq \Rng{N}{A}$ where $B$ is a set of inputs of $N$.
\begin{definition}
\label{def:crr_prob}
Let $A$ and $B$ be sets of inputs of circuit $N$ such that $A \cap B
\neq \emptyset$. The \tb{CRR problem} is to find the set $\Rng{N}{A }
\setminus \Rng{N}{A \setminus B}$.  If an output \pnt{z} is in
$\Rng{N}{A } \setminus \Rng{N}{A \setminus B}$, we will say that
\pnt{z} is \ti{excluded from} \Rng{N}{A} \ti{due to excluding inputs}
of $B$.
\end{definition}

Below, we give definition of an approximate solution to the CRR
problem.  The reason is that, as we showed in~\cite{tech_rep_crr},
algorithms for partial quantifier elimination provide only approximate
solutions to the CRR problem (see Subsection~\ref{subsec:pqe_qual}).
\begin{definition}
\label{def:approx_crr}
Let $Q$ denote $\Rng{N}{A} \setminus \Rng{N}{A \setminus B}$ from
Definition~\ref{def:crr_prob}.  Let $Q^*$ be a superset of $Q$.  We
will say that $Q^*$ is an \tb{approximate solution} of the CRR problem
if for every $\pnt{z} \in Q^* \setminus Q$ it is true that $\pnt{z}
\not\in \Rng{N}{A}$.  In other words, an approximate solution of the
CRR problem is the precise solution modulo outputs that are not in
\Rng{N}{A}. We will refer to such outputs as \tb{noise}.
\end{definition}

\section{Computing Range Reduction By Partial Quantifier Elimination}
\label{sec:crr_by_pqe}
In this section, we relate range computation with quantifier
elimination and range reduction computation with partial quantifier
elimination. Besides, we recall our previous results on complete and
partial quantifier elimination.
%
%
\subsection{Complete and partial quantifier elimination}
\label{subsec:pqe_def}
Given a quantified CNF formula \prob{W}{F(V,W)}, the problem of
\tb{Quantifier Elimination} (\tb{QE}) is to find a quantifier-free
formula $H(V)$ such that $H \equiv \prob{W}{F}$.

Given a quantified CNF formula \prob{W}{G(V,W) \wedge F(V,W)}, the
problem of \tb{Partial Quantifier Elimination} (\tb{PQE}) is to find a
quantifier-free CNF formula $G^*(V)$ such that $G^* \wedge \prob{W}{F}
\equiv \prob{W}{G \wedge F}$.  We will say that formula $G^*$ is
obtained by \tb{taking} \pnt{G} \tb{out of the scope of quantifiers}
in \prob{W}{G \wedge F}. The term ``partial'' emphasizes the fact that,
in contrast to QE, only a part of the quantified formula is taken out
of the scope of quantifiers.

In this paper, we do not discuss algorithms for solving the PQE
problem.  In Subsection~\ref{subsec:method_comp}, we only explain how
a PQE algorithm implements a cut advancement strategy. More
information can be found in ~\cite{hvc-14}.

%
%
\subsection{Computing range by QE and range reduction by PQE}
\label{subsec:crr_by_pqe} 
Let $F(X,Y,Z)$ be a CNF formula specifying circuit $N(X,Y,Z)$. That is
every consistent assignment of Boolean values to variables of $N$ is a
satisfying assignment for $F$ and vice versa. Let $H(Z)$ be a CNF
formula such that $H \equiv \Prob{X}{Y}{F}$. Then the outputs of $N$
satisfying $H$ specify the range of $N$. So computing the range of $N$
comes down to solving the QE problem.  Assume that only inputs
satisfying formula $G(X)$ are considered. Then a CNF formula $H(Z)$
equivalent to \Prob{X}{Y}{G \wedge F} specifies the range of $N$ under
inputs satisfying $G$.

Suppose that we are interested only in finding how the range of $N$
\ti{reduces} if the inputs falsifying $G$ are excluded i.e.  only
inputs satisfying $G$ are considered.  As we showed in
~\cite{tech_rep_crr}, this problem comes down to finding CNF formula
$G^*(Z)$ such that $G^* \wedge \prob{W}{F} \equiv \prob{W}{G \wedge
  F}$ where $W = X \cup Y$.  Here $G^*$ specifies an approximate
solution to the CRR problem (see Definition~\ref{def:approx_crr}) i.e.
\begin{itemize}
\item the outputs falsifying $G^*$ specify a superset of the set of
  outputs disappearing from the range of $N$ due to excluding inputs
  falsifying $G$
\item every output \pnt{z} falsifying $G^*$ either disappeared from the
  range of $N$ due to excluding inputs falsifying $G$ or is not in the
  range of $N$
\end{itemize}

%
%
\subsection{Complexity of computing range and range reduction}
\label{subsec:complexity}
As we mentioned above, computing the range of $N$ reduces to QE
whereas finding range reduction comes down to PQE. So the complexity
of computing range and range reduction is directly related to that
of QE and PQE.

The difference between QE and PQE can be expressed in terms of
computing clause redundancy~\cite{fmcad13,hvc-14}.  For the sake of
simplicity, in descriptions of algorithms of ~\cite{fmcad13} and
~\cite{hvc-14} given below we omitted the fact that they use
branching. (First the problem is solved in subspaces. Then the results
of branches are merged to produce a solution that holds for the entire
search space.)

As we showed in~\cite{fmcad12,fmcad13}, to eliminate quantifiers from
\prob{W}{F(X,Y,Z)} where $W = X \cup Y$ one needs to find a CNF
formula $H(Z)$ implied by $F$ that makes all the clauses of $F$ with
quantified variables redundant in \prob{W}{F \wedge H}. Then formula
$H$ is logically equivalent to \prob{W}{F}.  We will refer to clauses
containing variables of $W$ (i.e. quantified variables) as
\pnt{W}\tb{-clauses}.

Clauses of $H$ are built from $F$ by adding resolvent clauses.  If a
resolvent is a $W$-clause, its redundancy has to be proved along with
the original $W$-clauses of $F$. Eventually, a sufficient number of
clauses depending only on free variables (i.e. those of $Z$) is added
to make the new and old $W$-clauses of $F$ redundant.

Now, let us consider the PQE problem of taking formula $G(X)$ out of
the scope of quantifiers in \prob{W}{G \wedge F}. As we showed
in~\cite{hvc-14}, this problem comes down to finding formula $G^*(Z)$
implied by $G \wedge F$ that makes redundant the clauses of $G$ in
\prob{W}{G^* \wedge G \wedge F}. The clauses of $G^*$ are built by
resolving clauses of $G \wedge F$.  If a resolvent is a $W$-clause and
is a descendant of a clause of $G$, its redundancy needs to be proved
as well. On the other hand, $W$-resolvents produced solely from
clauses of $F$ do not need to be proved redundant. This can make PQE
\ti{drastically simpler} than QE if $G$ is much smaller than $F$.
%
%
\subsection{Quality of PQE-solving}
\label{subsec:pqe_qual}
In this subsection, we clarify the relation between the quality of a
PQE-algorithm and that of an approximate solution to the CRR
problem. As in the previous subsection, we assume that formula
$G^*(Z)$ is obtained by taking formula $G(X)$ out of the scope of
quantifiers in \prob{W}{G \wedge F}.  Here $W = X \cup Y$ and $F$
specifies circuit $N(X,Y,Z)$.

The definitions below are based on the following observation. If $G^*$
is a solution to the PQE problem above, a formula obtained by adding
to $G^*$ a clause $C(Z)$ implied by $F$ is also a solution to the PQE
problem.  One can view $C$ as ``noise''. Such a clause is falsified
only by outputs that are not in the range of $N$. In general, the set
of outputs falsifying a clause of $G^*$ can contain outputs that are
in the range of $N$ and those that are not.

%
%

\begin{definition}
Let $C$ be a clause of $G^*$. We will say that $C$ is \tb{noise-free}
if every output falsifying $C$ is in the range of $N$. Otherwise,
clause $C$ is called \tb{noisy}. Formula $G^*$ is called a noisy
solution to the PQE problem if it contains a noisy clause. Otherwise,
$G^*$ is called noise-free.
\end{definition}

\section{Equivalence 
Checking By Computing Range Reduction}
\label{sec:ec_by_crr}
In this section, we describe equivalence checking by CRR. In
Subsection~\ref{subsec:main_prop}, we give the main proposition on
which our method is based.  Subsection~\ref{subsec:method_comp}
compares equivalence checking by CRR and SAT-solving.

%
%
\subsection{Main proposition}
\label{subsec:main_prop}
The intuition behind applying CRR to equivalence checking was described in the
introduction.  In this subsection, we substantiate this intuition in a
formal proposition.

Let $N'(X',Y',z')$ and $N''(X'',Y'',z'')$ be single-output
combinational circuits to be checked for equivalence.  Let
$M(X',X'',Y',Y'',z',z'')$ be the circuit composed of $N'$ and $N''$
as shown in Figure~\ref{fig:two_indep}.  Let $F'(X',Y',z')$ and
$F''(X'',Y'',z'')$ be CNF formulas specifying $N'$ and $N''$
respectively. Assume that $X' = \s{x'_1,\dots,x'_k}$ and $X'' =
\s{x''_1,\dots,x''_k}$. Denote by \EQ{X'}{X''} formula
$(x'_1 \equiv x''_1) \wedge \dots \wedge (x'_k \equiv x''_k)$ that is
satisfied by inputs \pnt{x'} and \pnt{x''} to $N'$ and $N''$ iff
\pnt{x'} = \pnt{x''}.

%
%
\begin{proposition}
\label{prop:ec_pqe}
Let $H(z',z'')$ be a CNF formula obtained by taking \EQ{X'}{X''} out
of the scope of quantifiers from the formula \prob{W}{\mi{EQ} \wedge
  F' \wedge F'' } where $W = X' \cup X'' \cup Y' \cup Y''$.  Circuits
$N'$ and $N''$ are functionally equivalent iff one of the two
conditions below hold:
\begin{enumerate}
\item $H(0,1) = H(1,0) = 0$ 
\item Circuits $N'$ and $N''$ are identical constants (i.e. $N' \equiv
  N'' \equiv 0$ or $N' \equiv N'' \equiv 1$.).
\end{enumerate}
\end{proposition}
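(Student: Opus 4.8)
The plan is to unfold the definition of $H$ via the PQE characterization and connect it to the range of $M$, then split on whether $N'$ and $N''$ are constants. By the PQE defining equation, $H \wedge \prob{W}{F' \wedge F''} \equiv \prob{W}{\mi{EQ} \wedge F' \wedge F''}$. The right-hand side, projected onto $(z',z'')$, is exactly the set of output pairs reachable by $M$ under inputs with $\pnt{x'}=\pnt{x''}$; the factor $\prob{W}{F' \wedge F''}$ is the full range of $M$ (under all inputs). First I would record the easy structural fact: since $X'$ and $X''$ are disjoint and each $F'$, $F''$ is satisfiable and total over its circuit, $\prob{W}{F' \wedge F''}$ as a formula over $(z',z'')$ is the conjunction of $\prob{X' \cup Y'}{F'}$ and $\prob{X'' \cup Y''}{F''}$, i.e.\ it is the product of the two individual ranges. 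So an assignment $(z',z'')$ satisfies $\prob{W}{F' \wedge F''}$ iff $z'$ is in the range of $N'$ and $z''$ is in the range of $N''$; and it additionally satisfies $\prob{W}{\mi{EQ} \wedge F' \wedge F''}$ iff moreover there is a \emph{common} input $\pnt{x}$ with $N'(\pnt{x})=z'$ and $N''(\pnt{x})=z''$.

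Next I would argue the ``if'' direction. Suppose (2) holds, say $N' \equiv N'' \equiv 0$; then trivially $N'$ and $N''$ are equivalent, and similarly for the constant $1$. Suppose instead (1) holds, $H(0,1)=H(1,0)=0$, and suppose for contradiction $N'$ and $N''$ are inequivalent: there is $\pnt{x}$ with $N'(\pnt{x}) \neq N''(\pnt{x})$, say $N'(\pnt{x})=0$, $N''(\pnt{x})=1$. Then $(z'=0,z''=1)$ satisfies $\prob{W}{\mi{EQ} \wedge F' \wedge F''}$ (use the common input $\pnt{x}$, which satisfies $\mi{EQ}$), hence it satisfies $H \wedge \prob{W}{F' \wedge F''}$, hence $H(0,1)=1$, contradicting (1). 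The case $N'(\pnt{x})=1,N''(\pnt{x})=0$ is symmetric. So $N'$ and $N''$ are equivalent.

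For the ``only if'' direction, assume $N'$ and $N''$ are functionally equivalent, and assume (2) fails, so the common function $f$ computed by both is non-constant. I must show $H(0,1)=H(1,0)=0$. Take the assignment $(z'=0,z''=1)$. Because $f$ is non-constant it is surjective onto $\{0,1\}$, so both $0$ and $1$ lie in the range of $N'$ and of $N''$; hence $(0,1)$ satisfies $\prob{W}{F' \wedge F''}$. On the other hand there is no common input $\pnt{x}$ with $N'(\pnt{x})=0$ and $N''(\pnt{x})=1$, since $N'(\pnt{x})=N''(\pnt{x})$ for all $\pnt{x}$; so $(0,1)$ does \emph{not} satisfy $\prob{W}{\mi{EQ} \wedge F' \wedge F''}$. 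Since $H \wedge \prob{W}{F' \wedge F''} \equiv \prob{W}{\mi{EQ} \wedge F' \wedge F''}$ and $(0,1)$ satisfies the second conjunct on the left but not the equivalent formula on the right, it must falsify $H$; thus $H(0,1)=0$, and symmetrically $H(1,0)=0$, giving (1).

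The main obstacle is the ``only if'' step in the degenerate case and, relatedly, pinning down the behavior of $H$ on assignments \emph{outside} $\prob{W}{F' \wedge F''}$: the PQE equation only constrains $H$ on the support of $\prob{W}{F' \wedge F''}$, so when $f$ is constant there exist output pairs (e.g.\ $(0,1)$ when $f \equiv 0$) where $H$ is unconstrained and may be either $0$ or $1$ depending on which solution the PQE solver returns — this is exactly why clause~(2) is needed as a separate alternative rather than being subsumed by clause~(1), and I would make sure the write-up flags that $H$ is only determined modulo ``noise'' clauses implied by $F' \wedge F''$, consistent with Definition~\ref{def:approx_crr} and the noise discussion in Subsection~\ref{subsec:pqe_qual}.
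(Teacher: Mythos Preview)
Your proof is correct and follows essentially the same approach as the paper's: both directions hinge on the PQE defining equivalence, on the observation that $\prob{W}{F' \wedge F''}$ is the product of the two individual ranges (so $(0,1)$ is in it iff neither circuit is the ``wrong'' constant), and on the degenerate constant case being exactly where $H$ becomes unconstrained. Your write-up is somewhat cleaner than the paper's, which phrases the only-if direction syntactically in terms of which clauses can or cannot appear in $H$, whereas you argue directly from the semantic equivalence; but the logical content is the same.
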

\begin{proof}
\noindent\tb{The if part}.  Assume that $H(0,1)=H(1,0)=0$.  Since
$H(z',z'')$ is a solution to the PQE problem, it is implied by formula
$\mi{EQ} \wedge F' \wedge F''$. This means that for any consistent
assignments $(\pnt{x'},\pnt{y'},z')$ and $(\pnt{x''},\pnt{y''},z'')$
to $N'$ and $N''$ such that \pnt{x'}=\pnt{x''} the values of $z'$ and
$z''$ are equal to each other.  Thus $N'$ and $N''$ are functionally
equivalent.  If the second condition holds i.e.  $N'$ and $N''$ are
identical constants, then $N',N''$ are obviously equivalent.

\noindent\tb{The only if part}. Assume that $N'$ and $N''$ are
functionally equivalent.  We need to prove that one of the conditions
above holds. Assume the contrary i.e. that $H(0,1)=H(1,0)=0$ does not
hold and $N'$, $N''$ are not identical constants.

Let us assume, for the sake of clarity, that $H(0,1)=1$. (The case
when $H(1,0)=1$ can be considered in a similar manner.) This means
that formula $H$ does not contain a clause $C(z',z'')$ falsified by an
assignment $z'=0,z''=1$.  Let us consider the following two reasons
for that.

First, $C$ is not implied by $\mi{EQ} \wedge F' \wedge F''$. This means
that there is an assignment \pnt{x'}=\pnt{x''} to $N'$ and $N''$ for
which $N'$ evaluates to 0 and $N''$ evaluates to 1. Hence $N'$ and $N''$
are inequivalent and we have a contradiction.

Second, $C$ is implied by both $\mi{EQ} \wedge F' \wedge F''$ and $F'
\wedge F''$ and one does not need to add clause $C$ to make formula
$\mi{EQ}$ redundant. Since $F'$ and $F''$ do not share variables, the
fact that $F' \wedge F''$ implies $C$ means that the circuit $M$ does
not produce output $z'=0,z''= 1$ even when inputs of $M$ are not
constrained by $EQ$. This is possible only if at least one of the
circuits $N'$,$N''$ is a constant. Let us consider the following three
cases.
\begin{itemize}
\item $N'$ and $N''$ are identical constants. Contradiction.
\item $N'$ is a constant 1 and $N''$ is either a constant 0 or a non-constant
circuit. Hence $N'\not\equiv N''$ . Contradiction.
\item $N''$ is a constant 0 and $N'$ is either a constant 1 or a non-constant
circuit. So $N' \not\equiv N''$. Contradiction.
\end{itemize}
\end{proof}
%
%
%

\subsection{Equivalence checking by regular SAT-solving and CRR}
\label{subsec:method_comp}
In this subsection, we compare equivalence checking by a SAT-solver
with Conflict Driven Clause Learning (CDCL) and by CRR.

\setlength{\intextsep}{4pt}
\begin{wrapfigure}{l}{1.5in}
 \begin{center}
    \includegraphics[width=1.5in]{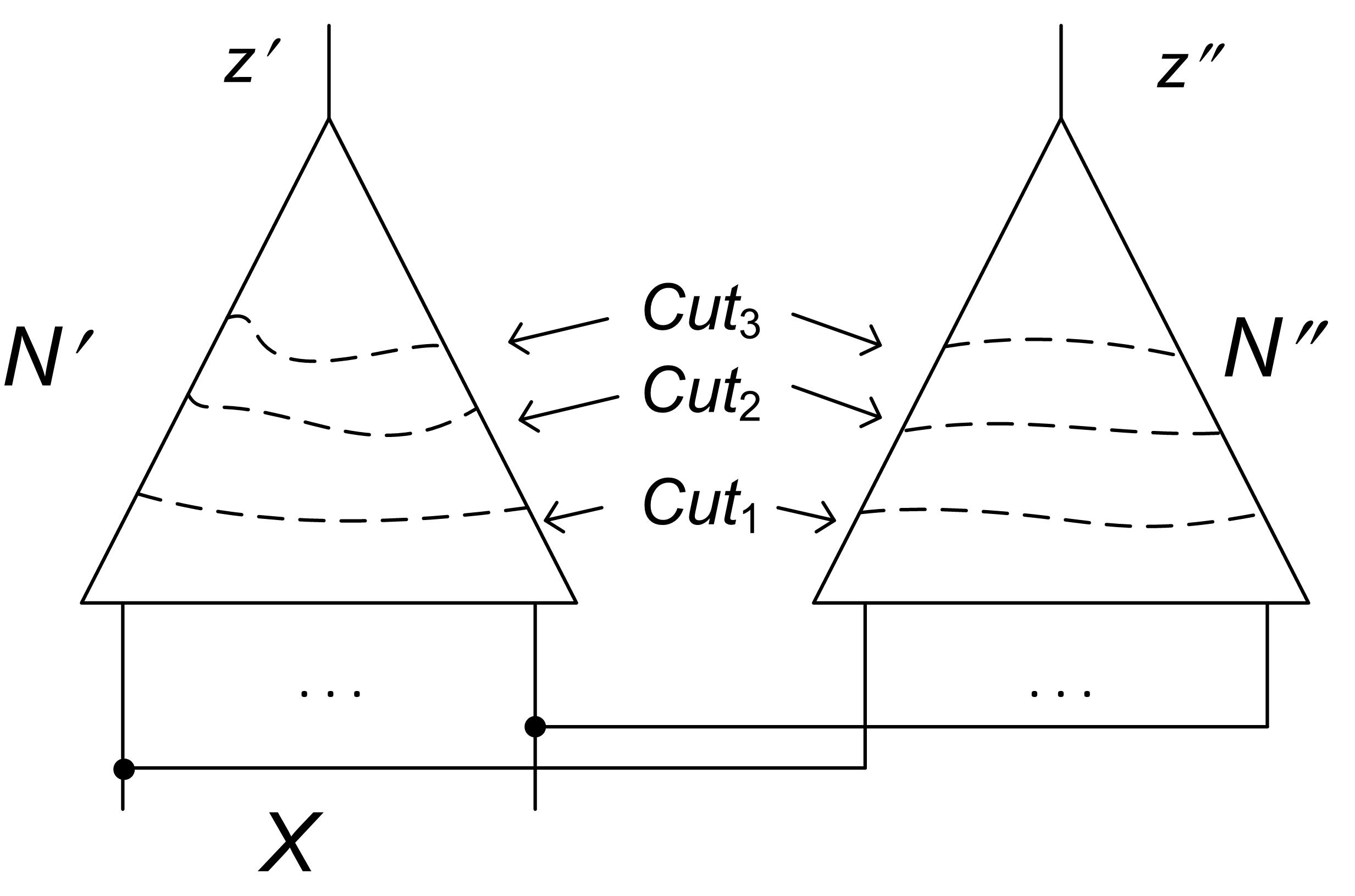}
  \end{center}
\caption{Equivalence checking by cut advancement}
\label{fig:ec_cut2}
\end{wrapfigure}

The equivalence checking of circuits $N'(X',Y',z')$ and
$N''(X'',Y'',z'')$ can be performed by checking the satisfiability of
formula $F$ equal to $F'(X,Y',z') \wedge F''(X,Y'',z'') \wedge (z'
\neq z'')$. Indeed, the existence of an assignment
$\pnt{x},\pnt{y'},\pnt{y''},z',z''$ satisfying $F$ means that $N'$ and
$N''$ produce different output values for the same input \pnt{x}. We
will refer to the method of equivalence checking by a SAT solver based
on CDCL solver as EC\_CDCL. Here EC stands for Equivalence Checking.

Proposition~\ref{prop:ec_pqe} justifies the following method of
equivalence checking by CRR.  We will refer to this method as EC\_CRR.  This
method solves the PQE problem specified by formula $F$ equal to
\prob{W}{\EQ{X'}{X''} \wedge F'(X',Y',z') \wedge F''(X'',Y'',z'')}
where $W = X' \cup X'' \cup Y' \cup Y''$.  Namely, EC\_CRR computes
formula $H(z',z'')$ obtained by taking $\mi{EQ}$ out of the scope of
quantifiers in \prob{W}{EQ \wedge F' \wedge F''}.  If $H(0,1) = H(1,0)
= 0$, then $N'$ and $N''$ are equivalent.  Otherwise, one needs to
check if $N'$ and $N''$ are identical constants. If so, then $N'$ and
$N''$ are equivalent, otherwise they are not. Theoretically, proving
$N'$ and $N''$ inequivalent does not require generation of a
counterexample.  The very fact that $H(0,1) = 1$ or $H(1,0)=1$ and
$N',N''$ are not constants guarantees that $N'$ and $N''$ are not
equivalent. However, in a practical implementation of EC\_CRR, a
counterexample may be generated before computation of formula $H$ is
completed.

To distinguish the formulas $F$ solved by EC\_CDCL and EC\_CRR, we
will refer to them as \Sup{F}{cdcl} and \Sup{F}{crr} respectively. The
SAT problem can be viewed as a special case of QE where all variables
of the formula are existentially quantified.  Since \ecn~ is actually
a PQE algorithm, one can compare \ecn~ and \eco~ in terms of computing
clause redundancy as we did Subsection~\ref{subsec:complexity}.  In
those terms, the objective of deriving an empty clause by a SAT-solver
is to make redundant all clauses with quantified variables (i.e. all
clauses containing literals). So the difference between \eco~ and
\ecn~ is that the former is aimed at making redundant \ti{all} clauses
of \Sup{F}{cdcl} while the goal of the latter is to make redundant
only a small subset of clauses of \Sup{F}{crr} (namely those of
$\mi{EQ}$).

The fact that \ecn~ needs to prove redundancy of a smaller set of
clauses than \eco~ does not necessarily imply greater efficiency of
\ecn.  What matters, though, is that solving the PQE problem specified
by \Sup{F}{crr} facilitates generation of proofs based on the cut
advancement strategy shown in Figure~\ref{fig:ec_cut2}. Such proofs are
natural for structurally similar circuits.  Here is a naive version
of how this works. To make clauses of $\mi{EQ}$ redundant one just
needs to produce new clauses obtained by resolving $\mi{EQ}$ with
those of $F' \wedge F''$. Note that a PQE solver has to make redundant
a new resolvent (depending on a quantified variable) \ti{only} if it
is a descendant of a clause from $\mi{EQ}$.  In general, such a clause
will contain variables of both $F'$ and $F''$ relating variables of
$N'$ and $N''$.  The set $G_1$ of new resolvents that made the clauses
of $\mi{QE}$ redundant specify a new ``cut'' that consists of the
variables present in $G_1$. After that a set $G_2$ of descendants of
$G_1$ that make the clauses of $G_1$ redundant is built. This goes on
until the cut consisting of variables $z'$ and $z''$ is reached.

A cut advancement strategy has been successfully used by industrial
equivalence checkers. However, it is applied only when $N'$ and $N''$
are so structurally similar that one can build a cut consisting of
internal points of $N'$ and $N''$ that are functionally equivalent. So
this version of cut advancement is very limited. On the other hand,
EC\_CRR can arguably extend the cut advancement strategy to a much
more general class of equivalence checking problems.

\section{Two Ways To Handle Complexity Of Range Computation}
\label{sec:rng_cmp_compl}
Function verification of a combinational circuit $N(X,Y,Z)$ comes down
to checking if there is an input for which $N$ produces an erroneous
output. We assume that the erroneous outputs are specified by the user
as a set $E$.  A straightforward way to verify $N$ is to find the
range of $N$ and check if it overlaps with $E$. As we mentioned in
Subsection~\ref{subsec:crr_by_pqe}, finding the range of $N$ comes
down to solving the QE problem. Although, in general, the latter is
very hard, in some cases, it can be efficiently solved for real-life
circuits. For example, if $N$ is a single-output circuit, computing
its range comes down to Circuit-SAT that, in many practical cases, can
be solved by state-of-the-art SAT-solvers.  In
Sections~\ref{sec:rng_cmp_compl} and~\ref{sec:sim_by_excl}, we assume
that functional verification of $N$ is too hard for a SAT-solver.

In this section, we consider two methods for reducing the complexity
of computing range of $N$. One method is to compute range of $N$ for a
subset of inputs. This method is used in traditional simulation. We
introduce another method based on the following idea: compute range
\ti{reduction} caused by excluding some inputs of $N$ rather than the
\ti{entire range}.

%
%
\subsection[title]{Simulation as range computation\footnote{\label{fnote:one}
Traditionally, given a combinational circuit $N(X,Y,Z)$, simulation is
a deterministic procedure for computing the output \pnt{z} produced by
$N$ when an input \pnt{x} is applied. Unfortunately, this procedure
does not have a trivial extension to the case when two or more inputs
are applied \ti{at once}.  For that reason, we use a \ti{semantic}
notion of simulation as a computation of the range of $N$ for a subset
of inputs, omitting a detailed description of this computation.
}}
\label{subsec:sim_as_rng_comp}
Let $F(X,Y,Z)$ be a CNF formula specifying circuit $N$. Computing the
range of $N$ comes down to finding formula $H(Z)$ logically equivalent
to \prob{W}{F} where $W = X \cup Y$. Simulation can be viewed as a way
to decrease the complexity of finding $H(Z)$ by shrinking the set of
allowed inputs to only one. Let $S(X)$ be a CNF formula satisfied by
only one input \pnt{x}.  Then computing the output produced by $N$
comes down to solving the QE problem \prob{W}{S \wedge F}.  Formula
$S$ can be represented as a set of $|X|$ unit clauses satisfied only
by \pnt{x}. (A clause is called unit if it contains exactly one
literal.)  A solution $H(Z)$ for this QE problem can be represented as
a set of $|Z|$ unit clauses satisfied only by the output of $N$ for
input \pnt{x}.

An obvious flaw of traditional simulation is that it provides
information about the value of $N$ only for one input out of
$2^{|X|}$.  One can address this issue by using formulas $S$ with many
satisfying assignments, Then solving the QE problem specified by
\prob{W}{S \wedge F} finds the range of $N$ where all inputs
satisfying $S$ are processed together. Unfortunately, the complexity
of QE grows very fast as the number of inputs satisfying $S$
increases. One way to reduce the complexity of QE when $S$ allows many
inputs is employed in symbolic simulation~\cite{SymbolSim}. The idea
is to pick formulas $S$ for which all inputs satisfying $S$ have the
same execution trace.

%
%
\subsection{Simulation-by-exclusion}
\label{subsec:sim_by_excl_and_crr}
We described the main idea of simulation-by-exclusion in
Subsection~\ref{subsec:sim_by_crr}. Given constraints excluding some
inputs, compute only reduction of range of $N$ caused by this
exclusion rather than the entire range of $N$ under allowed inputs.
Here we continue explanation describing how simulation-by-exclusion is
implemented by PQE. Let the complete assignments to $X$
\ti{falsifying} CNF formula $Q(X)$ specify the inputs excluded from
consideration.  Let CNF formula $Q^*(Z)$ be obtained by taking $Q(X)$
out of the scope of quantifiers in \prob{W}{Q \wedge F} where $W = X
\cup Y$. Then the outputs falsifying $Q^*$ specify the range reduction
of $N$ due to exclusion of inputs falsifying $Q$.

Suppose that no output falsifying $Q^*$ is in the set $E$ of erroneous
outputs of $N$. This implies that either $N$ is correct or there is an
input \pnt{x} producing an output from $E$ (a counterexample) that is
not excluded i.e. \pnt{x} satisfies $Q$. This means that removing all
inputs falsifying $Q$ from consideration is safe because it cannot
remove \ti{all} counterexamples (if any).

As we mentioned earlier, a PQE solver provides only an approximate
solution to the CRR problem. This means that even if an output \pnt{z}
falsifying $Q^*$ is in $E$, one still needs to check if \pnt{z} is in
the range of $N$. A trivial way to do it is to check if formula
$\overline{\cof{C}{z}} \wedge F$ is satisfiable where \cof{C}{z} is
the longest clause falsified by \pnt{z}. If so, clause \cof{C}{z} is
not implied by $F$ and \pnt{z} is in the range of $N$.

Checking the satisfiability of $\overline{\cof{C}{z}} \wedge F$ may be
expensive. However, in the algorithm of simulation-by-exclusion we
introduce in Section~\ref{sec:sim_by_excl}, formula $F$ is constantly
updated by adding clauses excluding inputs of $N$.  This may
drastically simplify the SAT-check above. Besides, there are at least
two techniques to make this SAT-check even simpler. The first
technique is based on the fact that all inputs producing \pnt{z} (if
any) falsify $Q$. So checking if \pnt{z} is in the range of $N$
reduces to testing the satisfiability $\overline{Q} \wedge
\overline{\cof{C}{z}} \wedge F$.

The second technique is based on the following observation. Let $R$ be
a resolution derivation of a clause $C$ of $Q^*$ falsified by
\pnt{z}. Then if \pnt{z} is in the range of $N$, every cut
$\mi{Cut}(R)$ of proof $R$ has to include at least one clause $A$ that
is implied by $Q \wedge F$ but not $F$. (We assume here that $R$ is a
directed graph nodes of which correspond to original clauses of $Q
\wedge F$ or resolvents.) If every clause of $\mi{Cut}(R)$ is implied
by $F$ then the clause $C$ derived by $R$ is implied by $F$ too and so
\pnt{z} is not in the range of $N$.  So to find out if \pnt{z} is in
the range of $N$ it is sufficient to check if formula $\overline{A}
\wedge \overline{Q} \wedge \overline{\cof{C}{z}} \wedge F$ is
satisfiable for every clause $A$ of $\mi{Cut}(R)$. The longer clause
$A$ is the simpler this SAT-check. If all such formulas are
unsatisfiable for $\mi{Cut}(R)$ then \pnt{z} is not in the range of
$N$. Otherwise, an input of $N$ producing \pnt{z} can be extracted
from an assignment satisfying formula $\overline{A} \wedge
\overline{Q} \wedge \overline{\cof{C}{z}} \wedge F$.

%
%
\subsection{Simulation-by-exclusion versus regular simulation}
\label{subsec:comp_two_sims}
In this subsection, term ``regular simulation'' refers to finding a
solution $H(Z)$ to the QE problem specified by \prob{W}{Q \wedge F}
that we introduced in Subsection~\ref{subsec:sim_as_rng_comp}
(see also Footnote~\ref{fnote:one}).

The main difference between regular simulation and
simulation-by-exclusion is as follows. In regular simulation, to get a
result one needs to perform a computation that goes all the way from
inputs to outputs. If only one input \pnt{x} is allowed by $Q$, this
computation produces an execution trace consisting of values assigned
to variables of $N$ when applying input \pnt{x}. On the other hand,
simulation-by-exclusion can get a valuable result even by a local
computation that does not reach the outputs of $N$.

\setlength{\intextsep}{4pt}
\begin{wrapfigure}{l}{1.3in}
 \begin{center}
    \includegraphics[width=1.3in]{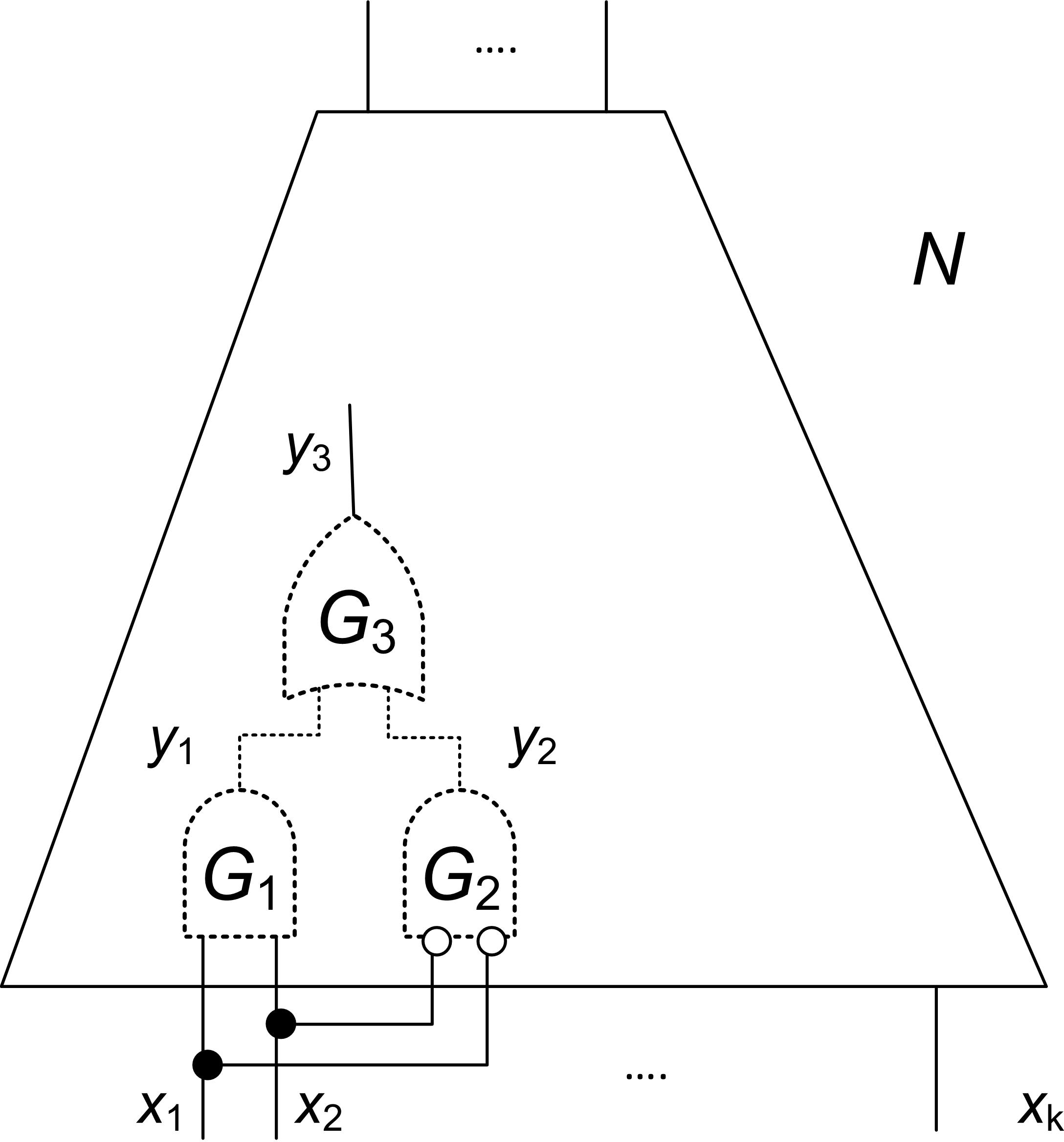}
  \end{center}
\caption{An example illustrating simulation-by-exclusion}
\label{fig:sim_by_excl}
\end{wrapfigure}

Let us consider the example of Figure~\ref{fig:sim_by_excl} showing
a circuit $N$. Suppose that lines $x_1,x_2$ feed only gates $G_1$ and
$G_2$ and lines $y_1,y_2$ feed only gate $G_3$.  Note that the output
of $G_3$ specified by $y_3$ implements function $x_1 \equiv x_2$
i.e. $y_3 = 1$ iff values of $x_1$ and $x_2$ are equal.  Let $C$
denote clause $x_1 \vee x_2$. It is not hard to show that $C$ is
redundant in formula \prob{W}{C \wedge F} i.e.  $\prob{W}{C \wedge F}
\equiv \prob{W}{F}$. The redundancy of $C$ means that one can exclude
the inputs falsifying $C$ (i.e.  all the inputs for which $x_1 = 0,
x_2 = 0$) from consideration because such exclusion does not change
the range of $N$. Indeed, since $x_1,x_2,y_1,y_2$ feed only gates
$G_1,G_2,G_3$, circuit $N$ produces the same output for two inputs
that are different only in the values of $x_1$ and/or $x_2$ if these
inputs produce the same assignment to $y_3$.  Since, even after
excluding assignment $x_1=0,x_2=0$, both values of $y_3$ can still be
produced, the set of outputs circuit $N$ can generate remains intact.

Importantly, the fact that clause $C$ can be safely used to exclude
inputs of $N$ is derived \ti{locally} without any knowledge of the
rest of the circuit $N$. Such a result cannot be reproduced
efficiently by regular simulation. To exclude the inputs falsified by
$C$ from consideration one would need to solve the QE problem
specified by \prob{W}{\overline{C} \wedge F}. The complexity of such
QE can be very high if the size of circuit $N$ is large. Of course, if
regular simulation succeeds in performing QE, it might find a
bug. However, if no bug is found, the result is the same as for
simulation-by-exclusion: the inputs falsifying $C$ can be
excluded. However the price of this result in regular simulation can
be dramatically higher than in simulation-by-exclusion.

\section{A Few Algorithms Implementing  Simulation-By-Exclusion}
\label{sec:sim_by_excl}
In this section, we describe a few algorithms implementing
simulation-by-exclusion.  The basic algorithm is shown in
Figure~\ref{fig:impl_sim_by_excl} and described in
Subsections~\ref{subsec:alg_descr} and~\ref{subsec:completeness}. Two
modifications of this algorithm are given in
Subsection~\ref{subsec:modifications}. For the sake of clarity, we
will assume that one needs to verify a \ti{single-output} circuit $N$.
We will assume that if $N$ is correct it always evaluates to 0.  An
input for which $N$ evaluates to 1 is a counterexample showing that
$N$ is buggy. The extension of algorithms of this section to
multi-output circuits is straightforward.
%
%
\subsection{Algorithm description}
\label{subsec:alg_descr}
The algorithm of simulation-by-exclusion called \ti{SimByExl} consists
of three parts separated in Figure~\ref{fig:impl_sim_by_excl} by the
dotted lines. In the first part (lines 1-6), an input \pnt{\hat{x}} is
generated for which $N$ evaluates to 0 (lines 2-4).  This input is
never excluded and serves two purposes.  First, it guarantees that $N$
is not a constant 1. Second, keeping \pnt{\hat{x}} allowed prevents
\SE~ from excluding all inputs for which $N$ evaluates to 0. \SE~also
builds formula $F$ specifying $N$ (line 1) and generates
\cof{C}{\hat{x}}, the longest clause falsified by \pnt{\hat{x}} (line
5). Finally, \SE~initializes formula $G$ that accumulates the clauses
excluding inputs of $N$ (line 6).

The second part of \SE~ (lines 7-14) consists of a while loop.  In
every iteration of this loop, a new clause $C(X)$ is generated (line
10).  Clause $C$ excludes at least one new input \pnt{x} of $N$
constructed as an assignment satisfying formula $G \wedge
\cof{C}{\hat{x}}$ (line 8).  If \SE~fails to find such \pnt{x}, all
inputs of $N$ but \pnt{\hat{x}} have been excluded. Then $N$ is
correct since, for the excluded inputs, it takes the same value as for
\pnt{\hat{x}} i.e. 0.

%
%
\setlength{\intextsep}{2pt}
\setlength{\textfloatsep}{10pt}
\begin{figure}
\small
\vspace{-10pt}
\begin{tabbing}
aaa\=bb\=cc\= dd\= eeeeeeeeeeeeeeeeee\= \kill
// $N(X,Y,z)$ is a single-output circuit \\
// $X$ is the set of  input variables \\
// $Y$ is the set of  internal variables \\
// $z$ specifies the output variable \\
// $\mi{SimByExcl}$ returns a counterexample \\
// ~~~~~or \ti{nil} if no counterexample exists \\
//\\
$\mi{SimByExcl}(N)$\{\\
\tb{\scriptsize{1}}\> $F(X,Y,z) := \mi{GenCnfForm}(N)$; \\
\tb{\scriptsize{2}}\> $\pnt{\hat{x}} := \mi{GenInp}(X)$; \\
\tb{\scriptsize{3}}\> $\mi{val}(z) := \mi{Simulate}(N,\pnt{\hat{x}});$\\
\tb{\scriptsize{4}}\> if ($\mi{val}(z) = 1$) return(\pnt{\hat{x}}); \\
\tb{\scriptsize{5}}\> $\cof{C}{\hat{x}} := \mi{LongestFalsifClause}(\pnt{\hat{x}})$; \\
\tb{\scriptsize{6}}\> $G := \emptyset$; \\
$----------------$ \\
\tb{\scriptsize{7}}\> while (\ti{true}) \{\\
\tb{\scriptsize{8}}\Tt   $\pnt{x} := \mi{FindSatAssgn}(G \wedge \cof{C}{\hat{x}})$;\\
\tb{\scriptsize{9}}\Tt   if ($\pnt{x} = \mi{nil}$) return($\mi{nil}$); \\
\tb{\scriptsize{10}}\Tt   $C := \mi{GenFalsifClause}(\pnt{x},\pnt{\hat{x}})$; \\ 
\tb{\scriptsize{11}}\Tt   $C^* := \mi{SolvePQE}(\prob{W}{C \wedge F})$; // $W = X \cup Y$ \\
 ~~~~~~~~~~ \\
\tb{\scriptsize{12}}\Tt  if ($C^* = \overline{z}$) break;\\
\tb{\scriptsize{13}}\Tt  $F := F \wedge C$; \\
\tb{\scriptsize{14}}\Tt   $G:= G \wedge C$; \}\\
$----------------$ \\
\tb{\scriptsize{15}}\> if $((\overline{C} \wedge F \wedge z) \equiv \mi{false})$ return($\mi{nil}$); \\
\tb{\scriptsize{16}}\> $(\pnt{x},\pnt{y},z) := \mi{SatAssgn}(\overline{C} \wedge F \wedge z)$;\\
\tb{\scriptsize{17}}\> return(\pnt{x}); \}\\
\end{tabbing} 
\vspace{-10pt}
\caption{The basic algorithm of simulation-by-exclusion}
\label{fig:impl_sim_by_excl}
\end{figure}

After generating clause $C$, \SE~computes CNF formula $C^*(z)$ (line
11).  It is obtained by taking $C$ out of the scope of quantifiers in
\prob{W}{C \wedge F} where $W = X \cup Y$.  Formula $C^*$ can only
consist of clause $\overline{z}$ or be empty (no clauses) in which
case $C^*$ is always true.  Note that formula $C^*$ cannot consist of
unit clause $z$ because \SE~never excludes all inputs for which $N$
evaluates to 0. Hence clause $z$ is not implied by $C \wedge F$. If
$C^*$ is empty, then the inputs falsifying $C$ can be safely excluded
from consideration. So clause $C$ is added to $F$ (line 13) and to $G$
(line 14).

If $C^*$ is not empty and hence equal to $\overline{z}$, \SE~breaks
the loop (line 12) and executes the third part of the algorithm (lines
15-17). In this part, \SE~checks whether the derived clause
$\overline{z}$ is pure noise or $C$ excludes a counterexample. If no
counterexample is excluded by $C$, then clause $\overline{z}$ is just
noise, i.e. \Impl{F}{\overline{z}} and so $N$ is correct. Otherwise,
\SE~extracts a counterexample from an assignment satisfying
$\overline{C} \wedge F \wedge z$.

Note that one can just check whether $N$ is constant 0 by running the
SAT-check on $F \wedge z$ where $F$ is the original formula specifying
$N$. SAT-checking of $F \wedge z \wedge \overline{C}$ can be much
simpler for the following two reasons. First, $F$ is constantly
updated by \SE~ by adding clauses excluding inputs. As we explain in
Subsection~\ref{subsec:modifications} such clauses are obtained by
non-resolution derivation and so are very valuable for a
resolution-based SAT-solver. Second, the negation of clause $C$
consists of unit clauses that can additionally simplify the
SAT-check. One more way to simplify this SAT-check that we described
in Subsection~\ref{subsec:sim_by_excl_and_crr} is to exploit the
resolution derivation of clause $\overline{z}$. For the sake of
simplicity this possibility is not mentioned in
Figure~\ref{fig:impl_sim_by_excl}.

%
%
\subsection{Completeness and soundness of \SE}
\label{subsec:completeness}
\SE~produces an answer for every circuit $N$ and so it is complete.
Indeed, in every iteration of the while loop, \SE~excludes at least
one new input of $N$ that has not been excluded so far. So the set of
allowed inputs monotonically decreases.  Eventually, \SE~either
excludes all inputs but \pnt{\hat{x}} or derives a clause
$\overline{z}$. In either case, \SE~terminates returning a result.

\SE~is sound. It reports that $N$ is correct in two cases. First, when
a clause $\overline{z}$ is generated and \SE~proves that
\Impl{F}{\overline{z}}.  Then $F \wedge z$ is unsatisfiable and hence
no counterexample exists.  Second, \SE~reports that $N$ is correct
when all inputs of $N$ but \pnt{\hat{x}} are excluded by added
clauses. Since \SE~ guarantees that a new clause cannot remove all
counterexamples and $N$ evaluates to 0 at \pnt{\hat{x}}, this means
that $N$ evaluates to 0 for all inputs.  \SE~reports that $N$ is buggy
when an assignment satisfying $F \wedge z$ is found. This assignment
specifies an input for which $N$ evaluates to 1.

%
%
\subsection{Two modifications of the algorithm}
\label{subsec:modifications}

In this subsection, we consider two modifications of \SE~ that can be
used to improve its efficiency. The first modification is given in
Figure~\ref{fig:use_sat}. The high-lighted part shows the lines added
to the code of \SE~of Figure~\ref{fig:impl_sim_by_excl}.

%
%
\setlength{\intextsep}{4pt}
\setlength{\textfloatsep}{10pt}
\begin{figure}
\small
\vspace{10pt}
\begin{tabbing}
aaa\=bb\=cc\= dd\= eeeeeeeeeeeeeeeeee\= \kill
$\mi{SimByExcl}(N)$\{\\
$~~~~~\ldots$ \\
\tb{\scriptsize{6}}\> $G := \emptyset$;  \\
\tb{\scriptsize{6.1}}\> \hl{$\mi{NumNewClauses : = 0}$;} \\
\tb{\scriptsize{7}}\> while (\ti{true}) \{\\
$~~~~~\ldots$ \\
\tb{\scriptsize{12}}\Tt  if ($C^* = \overline{z}$) break;  \\
\tb{\scriptsize{13}}\Tt    $F := F \wedge C$; \\
\tb{\scriptsize{14}}\Tt    $G:= G \wedge C$; \\
\tb{\scriptsize{14.1}}\Tt  \hl{$\mi{NumNewClauses} := \mi{NumNewClauses} + 1$;} \\
\tb{\scriptsize{14.2}}\Tt   \hl{if $(\mi{ NumNewClauses} > \#threshold)$ \{} \\
\tb{\scriptsize{14.3}}\ttt  \hl{$\mi{NumNewClauses} :=\emptyset$;} \\
\tb{\scriptsize{14.4}}\ttt    \hl{$\mi{answer} := \mi{SatCheck}(F \wedge z,\#\mi{backtracks})$;} \\
\tb{\scriptsize{14.5}}\ttt    \hl{if ($\mi{answer} \neq \mi{unfinished}$) return($\mi{answer}$);\}\}} \\
$~~~~~\ldots$ \\
\end{tabbing} 
\vspace{-10pt}
\caption{Modification of the algorithm shown in Figure~\ref{fig:impl_sim_by_excl}}
\label{fig:use_sat}
\end{figure}

The main idea here is to occasionally run a light SAT-check on formula
$F \wedge z$.  We assume that the effort of the SAT-solver is limited
by some parameter specifying e.g. the maximum number of backtracks.
Such a SAT-check is invoked when the number of new clauses added to
$F$ exceeds a threshold. Such a strategy makes sense because clauses
added to $F$ by \SE~are not derived by resolution. Deriving such
clauses by resolution may be dramatically more complex, which makes
them quite valuable for a resolution based SAT-solver.

Consider for instance, derivation of clause $C = x_1 \vee x_2$ for
circuit $N$ shown in Figure~\ref{fig:sim_by_excl} (see discussion of
Subsection~\ref{subsec:comp_two_sims}). Assume that circuit $N$ has
only one output specified by variable $z$. Suppose that formula $F
\wedge z$ is satisfiable (i.e. $N$ is buggy) and $C$ excludes at least
one counterexample. Then $C$ is not even implied by $F \wedge z$ and
thus cannot be derived by resolution. However, even if $C$ is implied
by $F \wedge z$, its resolution derivation may be very hard and
involve many clauses of $F$. On the other hand, a PQE-solver is
capable of concluding that $C$ can be safely added to $F$ just by
examining the clauses specifying gates $G_1$,$G_2$,$G_3$ and using the
fact that variables $x_1,x_2,y_1,y_2$ do not feed any other gates.

%
%

\setlength{\intextsep}{4pt}
\setlength{\textfloatsep}{10pt}
\begin{figure}
\small
\vspace{10pt}
\begin{tabbing}
aaa\=bb\=cc\= dd\= eeeeeeeeeeeeeeeeee\= \kill
$\mi{SimByExcl}(N)$\{\\
$~~~~~\ldots$ \\
\tb{\scriptsize{6}}\> $G := \emptyset$;  \\
\tb{\scriptsize{6.1}}\> \hl{$\mi{NumNewClauses : = 0}$;} \\
\tb{\scriptsize{7}}\> while (\ti{true}) \{\\
$~~~~~\ldots$ \\
\tb{\scriptsize{12}}\Tt  if ($C^* = \overline{z}$) break;  \\
\tb{\scriptsize{13}}\Tt    $F := F \wedge C$; \\
\tb{\scriptsize{14}}\Tt    $G:= G \wedge C$; \\
\tb{\scriptsize{14.1}}\Tt  \hl{$\mi{NumNewClauses} := \mi{NumNewClauses} + 1$;} \\
\tb{\scriptsize{14.2}}\Tt   \hl{if $(\mi{ NumNewClauses} > \#threshold)$ \{} \\
\tb{\scriptsize{14.3}}\ttt  \hl{$\mi{NumNewClauses} :=\emptyset$;} \\
\tb{\scriptsize{14.4}}\ttt     \hl{$\mi{Cex} := \mi{RunSim}(N,G,\#tries)$;} \\
\tb{\scriptsize{14.5}}\ttt    \hl{if ($\mi{Cex} \neq \mi{nil}$) return($\mi{Cex}$);\}\}} \\
$~~~~~\ldots$ \\
\end{tabbing} 
\vspace{-10pt}
\caption{Another modification of the algorithm of Figure~\ref{fig:impl_sim_by_excl}}
\label{fig:use_sim}
\end{figure}

The second modification is shown in Figure~\ref{fig:use_sim}. In this
modification one occasionally runs regular simulation. We assume that
only tests that have not been excluded yet are generated. These are
the tests satisfying formula $G$.  A simulation run is performed when
the number of new clauses added to $G$ exceeds a threshold. We assume
that the number of tests generated in a simulation run is limited by
parameter \ti{\#tries}. Performing occasional simulation runs makes
sense because adding new clauses reduces the search space and hence
increases the probability of generating a counterexample.

\section{Background}
\label{sec:background}
The existing methods for checking equivalence of combinational
circuits $N'$ and $N''$ can be roughly partitioned into two groups.
The first group consists of methods that do not try to exploit the
structural similarity of $N'$ and $N''$. An example of an equivalence
checker of this group is an algorithm that builds separate
BDDs~\cite{bryant_bdds1} of $N'$ and $N''$ and then checks that these
BDDs are identical. One more example, is an equivalence checker
constructing a CNF formula that is satisfiable only if $N'$ and $N$''
are not equivalent. (In Subsection~\ref{subsec:method_comp} we
explained how such a formula is generated.) This equivalence checker
runs a SAT-solver~\cite{grasp,chaff} to test the satisfiability of
this formula.  Unfortunately, the algorithms of this group do not
scale well with the size of $N',N''$.

Methods of the other group try to take into account the similarity of
$N',N''$~\cite{kuehlmann97,date01}. These methods work well even for
quite large circuits $N',N''$ if they are very similar structurally. A
flaw of such methods is that they do not scale well as $N',N''$ become
more dissimilar. As we mentioned earlier, the method based on CRR that
we introduced in this paper can be arguably used to design an
equivalence checker that take the best of both worlds.  That is it
scales better than the methods of the first group in terms of circuit
size and improves on the scalability of the methods of the second
group in terms of circuit dissimilarity.

Simulation is still the main workhorse of verification. So a lot of
research has been done to improve its efficiency and effectiveness.
For example, symbolic simulation~\cite{SymbolSim} is used to run many
tests at once.  Constrained random simulation generates tests aimed at
achieving a particular goal e.g. better coverage with respect to a
metric~\cite{constr_rand}.  However, all these methods are based on
the paradigm of decreasing the complexity of range computation by
reducing the number of inputs to be considered at once. To the best of
our knowledge none of existing methods exploits the idea of computing
range reduction introduced in~\cite{tech_rep_crr}. There we explored
this idea in the context of model checking. In the current paper, we
contrast simulation based on computing range reduction with
traditional simulation.

\section{Conclusions}
\label{sec:conclusions}
We described new methods of equivalence checking and simulation based
on Computing Range Reduction (CRR). Our interest in these methods is
twofold. First, they allow one to take into account subtle structural
properties of the design. For instance, equivalence checking based on
CRR allows can employ a ``cut advancement'' strategy. A limited
version of this strategy has been very successful in equivalence
checking of industrial designs. Second, one can argue that CRR can be
efficiently performed using a technique called partial quantifier
elimination that we are working on.

\section*{Acknowledgment}
This research was supported in part by  NSF grants
CCF-1117184 and CCF-1319580.
\bibliographystyle{plain}
\bibliography{short_sat,local}

\begin{thebibliography}{10}

\bibitem{bryant_bdds1}
R.~Bryant.
\newblock Graph-based algorithms for {B}oolean function manipulation.
\newblock {\em IEEE Transactions on Computers}, C-35(8):677--691, August 1986.

\bibitem{SymbolSim}
R.~Bryant.
\newblock Symbolic simulation---techniques and applications.
\newblock In {\em DAC-90}, pages 517--521, 1990.

\bibitem{fmcad12}
E.Goldberg and P.Manolios.
\newblock Quantifier elimination by dependency sequents.
\newblock In {\em FMCAD-12}, pages 34--44, 2012.

\bibitem{fmcad13}
E.Goldberg and P.Manolios.
\newblock Quantifier elimination via clause redundancy.
\newblock In {\em FMCAD-13}, pages 85--92, 2013.

\bibitem{tech_rep_crr}
E.Goldberg and P.Manolios.
\newblock Bug hunting by computing range reduction.
\newblock Technical Report arXiv:1408.7039 [cs.LO], 2014.

\bibitem{hvc-14}
E.~Goldberg and P.~Manolios.
\newblock Partial quantifier elimination.
\newblock In {\em Proc. of HVC-14}, pages 148--164. Springer-Verlag, 2014.

\bibitem{date01}
E.~Goldberg, M.~Prasad, and R.~Brayton.
\newblock Using {SAT} for combinational equivalence checking.
\newblock In {\em {DATE}}, pages 114--121, 2001.

\bibitem{constr_rand}
O.~Guzey and L.C. Wang.
\newblock Coverage-directed test generation through automatic constraint
  extraction.
\newblock In {\em HLDVT}.

\bibitem{kuehlmann97}
A.~Kuehlmann and F.~Krohm.
\newblock {Equivalence Checking Using Cuts And Heaps}.
\newblock {\em DAC}, pages 263--268, 1997.

\bibitem{grasp}
J.~Marques-Silva and K.~Sakallah.
\newblock Grasp -- a new search algorithm for satisfiability.
\newblock In {\em ICCAD-96}, pages 220--227, 1996.

\bibitem{chaff}
M.~Moskewicz, C.~Madigan, Y.~Zhao, L.~Zhang, and S.~Malik.
\newblock Chaff: engineering an efficient sat solver.
\newblock In {\em DAC-01}, pages 530--535, New York, NY, USA, 2001.

\end{thebibliography}
\end{document}